\documentclass[aps,superscriptaddress,onecolumn,10pt,prx]{revtex4-2}
\usepackage{silence}
\usepackage{amsmath,mathtools,amsthm,amssymb}
\usepackage{tabularx}
\usepackage{tabularray}
\usepackage{tikz}
\usepackage{tabularx}
\usepackage{tabularray}

\usepackage{graphicx}

\usepackage{color}
\usepackage{bbold}
\usepackage{enumitem}
\usepackage{centernot}
\usepackage{complexity}
\usepackage{physics}
\usepackage{comment}
\usepackage{array}
\usepackage{graphics}
\usepackage{wrapfig}
\usepackage{fontsize}
\graphicspath{{figures/}}
\usepackage{biolinum}
\usepackage{bbm}
\usepackage{dsfont}
\usepackage{mathrsfs}
\usepackage{mathdots}
\usepackage{enumitem}
\usepackage[table]{xcolor}
\usepackage{booktabs}
\usepackage{amssymb}

\newcommand{\id}{\mathrm{Id}}

\newcommand{\be}{\begin{equation}}
\newcommand{\ee}{\end{equation}}
\newcommand{\ba}{\begin{eqnarray}}
\newcommand{\ea}{\end{eqnarray}}

\definecolor{lightred}{RGB}{243,229,231}
\definecolor{lightgreen}{RGB}{241,255,239}
\definecolor{lightblue}{RGB}{232,240,244}
\definecolor{RoyalBlue}{RGB}{65,105,225}
\definecolor{ForestGreen}{RGB}{34,139,34}   % RGB for ForestGreen
\definecolor{Maroon}{RGB}{135,0,0}
\definecolor{myrefcolor}{rgb}{0.067,0.5,0.5}
\definecolor{myurlcolor}{rgb}{0.1,0,0.9}

\usepackage[
    breaklinks,
    pdftex,
    colorlinks=true,
    linkcolor=myrefcolor,
    citecolor=myrefcolor,
    urlcolor=myrefcolor
]{hyperref}
\usepackage[capitalize]{cleveref}

\usepackage{hyperref}

\newtheorem{theorem}{Theorem}
\newtheorem{proposition}{Proposition}
\newtheorem{lemma}{Lemma}
\newtheorem{corollary}{Corollary}
\newtheorem{definition}{Definition}

\newtheorem{remark}{Remark}

\begin{document}

\title{Non-local Magic: closed-form solution and equivalence with magic of purification}

\author{Michele Viscardi}
\affiliation{Dipartimento di Fisica, Università degli Studi di Salerno, Via Giovanni Paolo II, 132, 84084 Fisciano (SA), Italy}

\author{Lorenzo Leone}

\affiliation{Dipartimento di Ingegneria Industriale, Università degli Studi di Salerno, Via Giovanni Paolo II, 132, 84084 Fisciano (SA), Italy}
\affiliation{Istituto Nazionale di Fisica Nucleare (INFN) Sezione di Napoli, Gruppo Collegato di Salerno, Italy}

\author{Alioscia Hamma}
\affiliation{Scuola Superiore Meridionale, Largo S. Marcellino 10, 80138 Napoli, Italy}
\affiliation{Istituto Nazionale di Fisica Nucleare (INFN) Sezione di Napoli}
\affiliation{Università degli Studi di Napoli Federico II , Dipartimento di Fisica Ettore Pancini}

\begin{abstract}

Non-local magic quantifies the non-stabilizerness of a bipartite quantum state that cannot be removed by local unitary transformations. Despite its natural definition, its evaluation generally requires a difficult optimization over local unitaries. Here, we show that for the log-stabilizer fidelity this optimization admits an exact closed-form analytic solution depending only on the Schmidt spectrum. We then introduce the magic of purification, defined as the minimum pure-state magic over all purifications of a mixed state, and show that it naturally induces a resource theory whose free states are normalized stabilizer-code projectors. For the log-stabilizer fidelity, the magic of purification admits a distance-based formulation in terms of the Uhlmann fidelity. Remarkably, we prove that non-local magic coincides with the minimum magic of purification along the unitary orbit of the reduced density operator. Our results provide both an efficient analytical characterization and a mixed-state resource-theoretic interpretation of non-local magic.

\end{abstract}
\maketitle

\section{Introduction}
Stabilizer resources occupy a peculiar position in quantum computation. They are capable of generating highly entangled many-body states, yet remain efficiently simulable on a classical computer~\cite{AaronsonGottesman2004}. Universal quantum computation therefore requires an additional resource beyond stabilizer operations, commonly referred to as \textit{magic} or \textit{non-stabilizerness}~\cite{BravyiKitaev2005,Veitch2014}. This observation has motivated the development of a resource theory of magic, together with a variety of monotones quantifying the distance of quantum states from the stabilizer set~\cite{LiuWinter2022,LeoneOlivieroHamma2022}. Besides its original role in fault-tolerant quantum computation, magic has emerged as a useful probe of quantum many-body complexity~\cite{LiuWinter2022,White2021conformal,Tarabunga2024criticalbehaviorsof,Viscardi_2026,esposito2026stabilizerentropytrustworthymixed}, complementing more conventional quantities based on entanglement.

Entanglement and magic, however, capture fundamentally different properties of a quantum state. Entanglement is invariant under local changes of basis, whereas magic is defined with respect to a distinguished Pauli frame and can therefore be extensively modified by local unitaries. A product state may, for instance, possess an arbitrarily large amount of magic while containing no non-local correlations whatsoever.

This motivates the notion of \textit{non-local magic}: the component of non-stabilzierness that cannot be removed by independently changing the basis of the two parties~\cite{Cao2025,QianWang2025}. Given a bipartite pure state, non-local magic is obtained by minimizing a magic monotone over arbitrary local unitaries acting on the two sides of the bipartition. In this way, all basis-dependent local contributions are removed, leaving only the magic intrinsically associated with the correlations between the subsystems.

This construction has recently revealed a remarkably close connection between non-local magic and the entanglement spectrum. In particular, the non-flatness of the latter provides a natural obstruction to removing magic by local transformations~\cite{Tirrito2024,Cao2025,jasser2026journeyflatlanddoesantiflatness}, and recent works have further explored how spectral properties can be used to estimate or characterize non-local non-stabilzierness in different settings~\cite{Torre2026,franchini2026schmidtgaugenonlocalmagicrepresentation,cusumano2025nonstabilizernessviolationschshinequalities,iannotti2026nonlocalmagicresourcesfermionic,cao2024non,Cepollaro2025harvesting,odavic2023complexityoffrustration,Robin_2026,liu2026entanglementantiflatnessnonlocalnonstabilizerness,Ebner2026magicbarrier,wei2026entirelynonlocalquantummagic}. In particular, Non-local magic has been studied in holography \cite{Cao_2025}, in many-body
settings \cite{Korbany2025,qian2025quantumnonlocalnonstabilizerness, Catalano2025generalizedWstates}, and
demonstrated experimentally
\cite{ahmad2025experimentaldemonstrationnonlocalmagic}. Nevertheless, the definition of non-local magic involves an optimization over two exponentially large unitary groups, making its direct evaluation prohibitive in general. Moreover, beyond its connection with the entanglement spectrum, a general resource-theoretic interpretation of this minimization remains to be understood. In particular, since the Schmidt spectrum of a bipartite pure state coincides with the spectrum of either reduced density operator, it is natural to ask whether non-local magic can be understood directly as a notion of magic for the corresponding mixed state.

In this work, we address these questions by focusing first on the log-stabilizer fidelity, a particularly natural magic monotone based on the maximal fidelity with a pure stabilizer state~\cite{LiuWinter2022}. Despite the difficulty of evaluating the stabilizer fidelity $D_{\mathrm{stab}}$ itself, we show that its non-local counterpart $D_{\mathrm{stab}}^{\mathrm{NL}}$ admits an exact analytical solution for arbitrary bipartite pure states. The result depends exclusively on the Schmidt spectrum and reduces the optimization over all local unitaries and stabilizer states to a finite maximization over the possible Bell-pair sectors:
\begin{align}\label{eq1}
    D_{\mathrm{stab}}^{\mathrm{NL}}(\ket{\psi})=-\log\max_{k\le n}2^{-k}\left(\sum_{i=1}^{2^k}\theta_{i}^{\downarrow}\right)^2 \,,
\end{align}
where $\theta_{i}^{\downarrow}$ are the Schmidt coefficients of the bipartite state $\ket{\psi}$ in descending order. The underlying mechanism is the quantized structure of stabilizer entanglement: across any bipartition, a pure stabilizer state is locally Clifford equivalent to a fixed number of Bell pairs together with unentangled stabilizer degrees of freedom~\cite{Fattal2004,PRXQuantum.6.020324}. As a consequence, the only admissible nonzero stabilizer Schmidt spectra are flat spectra of dyadic rank. The optimal non-local stabilizer fidelity is therefore determined by the closest such flat spectrum.  \cref{eq1} immediately reveals that the non-local stabilizer fidelity is much more tractable than its original variational definition might suggest. For an arbitrary state, its evaluation requires only $O(2^n)$ operations. At the same time, because the final expression depends exclusively on the Schmidt spectrum, the quantity is also experimentally accessible through entanglement-spectrum estimation.

The exact solution allows us to go beyond the spectral characterization and uncover a mixed-state interpretation of non-local magic. Inspired by the entanglement of purification~\cite{Terhal2002}, we introduce the \textit{magic of purification}: given a mixed state $\rho$, we define its purification magic as the minimum pure-state magic among all states that purify $\rho$. This construction provides a natural alternative to the usual convex-roof extension of pure-state magic monotones. The two constructions lead, in general, to different resource theories. The convex-roof extension, which we refer to as the \textit{magic of formation}, has as its free set the conventional convex hull of pure stabilizer states, denoted by $\mathrm{STAB1}$~\cite{LeoneBittel2024}. By contrast, the magic of purification naturally singles out the smaller set $\mathrm{STAB0}\subset\mathrm{STAB1}$ of mixed states admitting a stabilizer purification, equivalently normalized projectors onto stabilizer codes. Remarkably, this is characterized also as the set of states with vanishing stabilizer entropy \cite{Leone2022SRE}. We establish the resource-theoretic properties of this construction and show that, whenever the underlying pure-state monotone obeys strong monotonicity, the magic of formation is bounded from above by the magic of purification.

For the log-stabilizer fidelity, the purification construction takes an especially simple form. Using Uhlmann's theorem, we show that its magic of purification is exactly the negative logarithm of the maximum Uhlmann fidelity between $\rho$ and the set $\mathrm{STAB0}$. The purification optimization can thus be rewritten as the standard distance-to-the-free-set construction familiar from quantum resource theories~\cite{ChitambarGour2019}. This result clarifies the role of $\mathrm{STAB0}$ and gives the magic of purification a direct geometric interpretation in the space of mixed quantum states.

Most importantly, this framework provides an exact connection between mixed-state magic and non-local magic. For any density operator $\rho_A$ and any of its purifications $\ket{\rho_{AB}}$ across a bipartition $A|B$, we prove that the non-local log-stabilizer fidelity $D_{\mathrm{stab}}^{\mathrm{NL}}$ of the purification coincides with the minimum magic of purification $D_{\mathrm{stab}}^{\mathrm{pur}}$ attained along the complete unitary orbit of $\rho_A$:
\begin{align}
    D_{\mathrm{stab}}^{\mathrm{NL}}(\ket{\rho_{AB}})=\min_{U_A}D_{\mathrm{stab}}^{\mathrm{pur}}(U_A\rho_AU_A^{\dagger})\,.
\end{align}
Non-local magic can therefore be equivalently viewed in two ways: as the irreducible magic of a bipartite pure state under local changes of basis, or as the minimum purification magic compatible with the spectrum of its reduced density operator. 

We conclude by connecting, through the Choi--Jamio{\l}kowski correspondence, non-local magic to the structure of quantum filters. Stabilizer Choi states correspond to sharp stabilizer-compatible operations with flat spectra, whereas unsharp filtering requires a non-flat spectrum and hence nonzero non-local magic. Overall, our results establish non-local magic as a bridge between the entanglement spectrum, mixed-state magic, and quantum operations, while identifying it with the minimum magic of purification associated with the reduced state.

Altogether, our results place non-local magic at the intersection of two complementary viewpoints: the entanglement spectrum of bipartite pure states and the resource theory of magic for mixed states. The stabilizer fidelity provides an exactly solvable instance in which this connection can be made explicit, revealing non-local magic not only as a spectral measure of irreducible non-stabilzierness, but also as the minimum magic of purification associated with a reduced quantum state.

\section{Preliminaries}\label{sec:prelim}
We consider a system of $n$ qubits and denote with $d$ its dimension. The set of pure stabilizer states is denoted by $\mathrm{PSTAB}$. Its convex hull is denoted by $\mathrm{STAB1}$. The subset $\mathrm{STAB0}\in \mathrm{STAB1}$ consists of all the states that can be purified in stabilizer states. Equivalently $\mathrm{STAB0}=\{\Pi_S/2^{k}\}$ with $\Pi_S$ a rank-$2^{k}$ stabilizer code projector.

The free operations of the magic-state resource theory with free set $\mathrm{STAB1}$ are generated by combinations of the following elementary operations: Clifford unitaries, partial trace, measurements in the computational basis, composition with auxiliary qubits, and operations conditioned on measurement outcomes and classical randomness. By contrast, when the free set is taken to be $\mathrm{STAB0}$, conditioned operations may generate states outside the free set~\cite{Yashin_2025}. In the following, we denote by $\mathcal{S}_1$ and $\mathcal{S}_0$ the sets of stabilizer protocols associated with $\mathrm{STAB1}$ and $\mathrm{STAB0}$, respectively.

We now define stabilizer monotones for magic-state resource theory with the set of free states being $\mathrm{STAB1}$. We remark that analogous definitions are valid when the set of free states is chosen to be $\mathrm{STAB0}$. 
\begin{definition}[Stabilizer monotones]
\leavevmode\par\label{def:stabilizermonotones}
\begin{itemize}
    \item A stabilizer monotone $\mathcal{M}$ is a real-valued function defined for all $n\in \mathbb{N}$ qubit systems, or collections thereof, such that: (i) $\mathcal{\mathcal{M}}(\rho)=0$ if and only $\rho\in\mathrm{STAB1}$; and (ii) $\mathcal{M}$ is nonincreasing under stabilizer protocols $\mathcal{E}$, i.e. $\mathcal{M}(\mathcal{E}(\rho))\le \mathcal{M}(\rho)$.
\item A pure-state stabilizer monotone instead satisfies $\mathcal{M}(\psi)\le \mathcal{M}(\phi)$ for any pair of pure states $\psi,\phi$ such that there exists a stabilizer protocol $\mathcal{E}$ with $\mathcal{E}(\phi)=\psi$.

\item A pure-state stabilizer monotone is said to be {\em strong} if, for every pure state $\ket{\psi}$ and every stabilizer protocol $\mathcal{E}$ producing the ensemble of pure states $\{(p_i,\ket{\phi_i})\}$, it holds that $\mathcal{M}(\psi)\ge \sum_i p_i\mathcal{M}(\phi_i)$.
\end{itemize}
\end{definition}

Let us define one of the main object of this work, which is the stabilizer fidelity and its associated magic monotone. 
\begin{definition}[Stabilizer fidelity]\label{def:stabfid} Let $\ket{\psi}$ a pure quantum state, and $\mathrm{PSTAB}$ the set of pure stabilizer states. Then its stabilizer fidelity is defined as:
\begin{align}
    F_{\mathrm{stab}}(\ket{\psi})\coloneqq \max_{\sigma\in\mathrm{PSTAB}}|\langle\psi|\sigma\rangle|^2
\end{align}
while $D_{\mathrm{stab}}(\ket{\psi})\coloneqq-\log F_{\mathrm{stab}}(\ket{\psi})$ is is referred to as the log-stabilizer fidelity~\cite{Liu_2022}. This measure is generalized to mixed quantum state by using the Uhlmann fidelity, which is $\|\sqrt{\rho}\sqrt{\sigma}\|_1^2$ for two arbitrary mixed quantum states $\rho$ and $\sigma$. 
\end{definition}

Below, we also introduce the stabilizer R\'enyi entropy~\cite{Leone2022SRE} (SRE), which has become a standard tool for quantifying non-stabilzierness. Remarkably, its extension to mixed states can be defined in two different ways. The convex-roof construction~\cite{leone2024stabilizer} promotes the SRE to a faithful mixed-state magic monotone, whereas the extension introduced in Ref.~\cite{Leone2022SRE} has the advantage of naturally characterizing the free set $\mathrm{STAB0}$, which plays a central role in this work.
\begin{definition}[Stabilizer R\'enyi entropy]\label{def:sre}
Let $\rho$ be a (possibly mixed) quantum state. The stabilizer R'enyi entropy of order $\alpha$ is defined as
\begin{align}
M_{\alpha}(\rho)\coloneqq S_{\alpha}(\Xi_{\rho})+S_{2}(\lambda_\rho)-n,
\end{align}
where $S_{\alpha}$ denotes the R\'enyi entropy of order $\alpha$, $\lambda_{\rho}$ denotes the spectrum of $\rho$, and $\Xi_{\rho}$ is the probability distribution with components $\Xi_{\rho}(P)\coloneqq\frac{\tr^2(P\rho)}{d\tr(\rho^2)}$. Remarkably, the stabilizer R'enyi entropy faithfully characterizes the set $\mathrm{STAB0}$, namely
\begin{align}
M_{\alpha}(\rho)=0\iff \rho\in \mathrm{STAB0}.
\end{align}
\end{definition}

\section{A closed-form expression for non-local magic}
Let us begin by defining non-local magic formally. 
\begin{definition}[Non-local magic] Let $A|B$ a bipartition of $n$ qubits. Let $\mathcal{M}$ any magic monotone and $\ket{\psi}$ a bipartite pure quantum state. The non-local magic of $\ket{\psi}$ is defined as
\begin{align}
    \mathcal{M}^{\mathrm{NL}}(\ket{\psi})\coloneqq \min_{U_A,U_B}\mathcal{M}(U_A\otimes U_B\ket{\psi})
\end{align}
where $U_A$ (resp. $U_B$) is a unitary operator on $A$ (resp. $B$). 
\end{definition}
While, in general, the optimization over bipartite unitaries may be impossible to evaluate analytically and doubly exponentially difficult (in the number of qubits of the respective parts) to perform numerically, in the following theorem we identify a particular measure of magic, the stabilizer fidelity in \cref{def:stabfid}, for which the optimization admits a closed-form analytical solution. This constitutes one of the main results of this work.
\begin{theorem}[Analytical expression for the non-local log-stabilizer fidelity]\label{th:mainth}
       Let $\ket{\psi}=\sum_{i}\theta_i\ket{\phi_{i}}_A\ket{\chi_{i}}_B$ be a bipartite pure state in its Schmidt decomposition. Let $D_{\mathrm{stab}}$ be the log-stabilizer fidelity. Then, it holds that
    \begin{align}
        D^{\mathrm{NL}}_{\mathrm{stab}}(\ket{\psi})=-\log\max_{k \leq \min\{n_A,n_B\}}2^{-k}\left(\sum_{i=1}^{2^k}\theta_{i}^{\downarrow}\right)^2
    \end{align}
    where $\theta_i^\downarrow$ are the Schmidt coefficients of $\ket{\psi}$ rearranged in descending order.
    Moreover, the maximum is achieved by the state having the Schmidt basis equal to the computational basis. Yet, this maximum is not unique as it can be reached by other stabilizer bases. The proof is given in \cref{app:nlm}.
    \begin{proof}
        The proof strategy is to show that the r.h.s. upper and lower bounds $D_{\mathrm{stab}}^{\mathrm{NL}}$. Let us start from the lower bound. Let us denote $F_{\mathrm{stab}}^{\mathrm{NL}}(\ket{\psi})\coloneqq\max_{U_A,U_B}F_{\mathrm{stab}}(U_A\otimes U_B\ket{\psi})$. Let us choose $U_A^{*}$ and $U_B^{*}$ such that
        \begin{align}
            U_A^{*}\otimes U_B^{*}\ket{\psi}=\sum_i\theta_i^{\downarrow}\ket{i}_A\ket{i}_B
        \end{align}
        where $\{\ket{i}\}$ is the computational basis and the Schmidt spectrum is in descending order. Moreover, let us choose as stabilizer state $\ket{\sigma}=2^{-k/2}\sum_{i=1}^{2^k}\ket{i}\ket{i}\otimes \ket{0}_{n-k}$, i.e. a Bell pair on the first qubits and $\ket{0}$ on the rest. For each $k$, it holds that
        \begin{align}
            F_{\mathrm{stab}}^{NL}(\ket{\psi})\ge |\langle\sigma| U_A^{*}\otimes U_B^{*}\ket{\psi}|^2=\frac{1}{2^k}\left(\sum_{i=1}^{2^k}\theta_{i}^{\downarrow}\right)^2
        \end{align}
        Hence, we can also take the $\max$ over $k$. This shows the lower bound. For the upper bound, we make use of \cref{lem1} proved in \cref{app:nlm}. Since local unitaries do not change the Schmidt spectrum we have:
        \begin{align}
            \max_{U_A,U_B}\max_{\sigma}|\langle\sigma|U_A\otimes U_B|\psi\rangle|^2\le \max_{\sigma}\Big(\sum_{i}\lambda_i^\downarrow(\sigma)\theta_i^\downarrow\Big)^2\,.
        \end{align}
        where $\lambda_i(\sigma)$ are the Schmidt coefficients of $\sigma$.
        Every stabilizer state $\sigma$ has Schmidt spectrum $\lambda_i(\sigma)=2^{-k/2}$ for $i\in S_\sigma$ and $|S_\sigma|=2^k$ for some subset $S_\sigma$ of the indices of the Schmidt coefficients, depending on the stabilizer state. Hence:
        \begin{align}
            \max_{U_A,U_B}\max_{\sigma}|\langle\sigma|U_A\otimes U_B|\psi\rangle|^2\le\max_{k\le \min\{n_A,n_B\}}2^{-k}\Big(\sum_{i\le 2^k}\theta_i^\downarrow\Big)^2
        \end{align}
        proving the statement.
    \end{proof}
\end{theorem}

\begin{corollary}\label{cor:maxent}
    The non-local magic can be interpreted as the maximum overlap with a maximally entangled state of dimension $2^k$.
\end{corollary}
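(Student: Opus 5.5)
The plan is to read Corollary~\ref{cor:maxent} directly off \cref{th:mainth} by reinterpreting the quantity $2^{-k}\bigl(\sum_{i=1}^{2^k}\theta_i^\downarrow\bigr)^2$ as a maximal overlap. The precise claim to establish is
\begin{align}
2^{-D^{\mathrm{NL}}_{\mathrm{stab}}(\ket{\psi})}=\max_{k\le\min\{n_A,n_B\}}\;\max_{\ket{\Phi}\in\mathrm{ME}_{2^k}}|\langle\Phi|\psi\rangle|^2 ,
\end{align}
where $\mathrm{ME}_{2^k}$ is the set of bipartite states whose Schmidt spectrum is flat on $2^k$ entries. Equivalently, these are the states of the form $(V_A\otimes V_B)\,2^{-k/2}\sum_{i=1}^{2^k}\ket{i}\ket{i}$ for arbitrary local unitaries $V_A,V_B$.

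The first step is to fix $k$ and show that
\begin{align}
\max_{\ket{\Phi}\in\mathrm{ME}_{2^k}}|\langle\Phi|\psi\rangle|^2=2^{-k}\Bigl(\sum_{i=1}^{2^k}\theta_i^\downarrow\Bigr)^2 .
\end{align}

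The upper bound follows from the same ingredient used in \cref{th:mainth}, namely \cref{lem1}. That lemma bounds the overlap of two bipartite pure states by $\sum_i\lambda_i^\downarrow\theta_i^\downarrow$, and this is von Neumann's trace inequality applied to the coefficient matrices. For $\ket{\Phi}\in\mathrm{ME}_{2^k}$ we have $\lambda_i^\downarrow=2^{-k/2}$ for $i\le 2^k$ and $\lambda_i^\downarrow=0$ otherwise. This gives exactly $2^{-k/2}\sum_{i\le 2^k}\theta_i^\downarrow$ for the amplitude.

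The bound is attained by taking $\ket{\Phi}=2^{-k/2}\sum_{i\le2^k}\ket{\phi_i}\ket{\chi_i}$ in the Schmidt basis of $\ket{\psi}$, with indices ordered by decreasing $\theta_i$. This state lies in $\mathrm{ME}_{2^k}$, and its overlap with $\ket{\psi}$ is exactly the bound.

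The second step is to maximize over $k$ and compare with \cref{th:mainth}. This identifies $F^{\mathrm{NL}}_{\mathrm{stab}}$ with the maximal overlap with a maximally entangled state of dyadic Schmidt rank $2^k$. Taking $-\log$ then gives the corollary.

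A conceptual remark clarifies why this works. By local unitary invariance of both sides, optimizing over $U_A\otimes U_B$ and over pure stabilizer states $\sigma$ collapses to optimizing over the local-unitary orbits of stabilizer states. By the Bell-pair structure of stabilizer entanglement already used in the proof of \cref{th:mainth}, these orbits are exactly the sets $\mathrm{ME}_{2^k}$.

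I do not expect a real obstacle, since the argument is a rephrasing of \cref{th:mainth}. The only points needing care are the following.
\begin{itemize}
\item The maximally entangled states must have dyadic dimension $2^k$ with $k\le\min\{n_A,n_B\}$. Arbitrary Schmidt ranks are not allowed, because the corollary is specifically about stabilizer-compatible flat spectra.
\item The phases in the overlap can be absorbed into $V_A$, so the overlap is real and nonnegative at the optimum.
\end{itemize}
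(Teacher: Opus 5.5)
Your proposal is correct and follows essentially the paper's route: the corollary is read off \cref{th:mainth}. Attainment comes from the flat $2^k$ state aligned with the Schmidt basis, which is the paper's Bell-pair stabilizer state after rotating the Schmidt basis to the computational basis, and the upper bound for each fixed $k$ comes from \cref{lem1}, i.e.\ von Neumann's trace inequality. Your closing remark, that the union of local-unitary orbits of pure stabilizer states is exactly $\bigcup_{k\le\min\{n_A,n_B\}}\mathrm{ME}_{2^k}$, is also valid and yields the identification directly, without going through the closed form.
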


\begin{remark}
    The computation of the non-local magic requires $O(2^n)$ operations because it depends only on the Schmidt spectrum in a clear way and one needs to compute $n$ times the sums $\sum_{i=1}^{2^k}\theta_i$ for $k\in [n]$. Remarkably, the computation of non-local log-stabilizer fidelity is much cheaper then the computation of the log-stabilizer fidelity itself. 
\end{remark}

\begin{remark} The non-local log-stabilizer fidelity becomes also accessible experimentally via the experimental reconstruction of the entanglement spectrum. 
    
\end{remark}

\section{Non-local magic as magic of purification}
In this section, we make explicit the connection between non-local magic and the magic of purification. In analogy with the entanglement of purification~\cite{Terhal_2002}, the magic of purification is defined as the minimum magic attainable over all purifications of a mixed quantum state and, as we will show below, provides an alternative extension of the stabilizer resource theory from pure to mixed states.

\subsection{Magic of purification}
\begin{definition}[Magic of purification]\label{def:mop}
Let $\rho=\sum_i p\ketbra{\phi_i}{\phi_i}$ be a (possibly) mixed state. Let $\mathcal{M}$ be a magic monotone. Then the magic of purification is defined as:
\begin{align}
\mathcal{M}_{p}(\rho) := \min_{\ket{\psi}\,:\, \tr_E(\psi)=\rho}\mathcal{M}(\ket{\psi})
\end{align}
where the minimum is taken over the purification of $\rho$. $E$ is the ancillary system used to purify $\rho$.
\end{definition}
Magic of purification naturally induces a resource theory on $\operatorname{STAB0}$ as it is shown in the following.
\begin{proposition}[Resource theory of $\operatorname{STAB0}$]\label{prop:stab0} Magic of purification is both faithful and a good monotone on the set $\mathrm{STAB0}$:
\begin{itemize}
    \item $\mathcal{M}_p(\rho)=0$ if and only $\rho\in\operatorname{STAB0}$;
    \item $\mathcal{M}_{p}(\mathcal{E}(\rho))\le \mathcal{M}_p(\rho)$ for every linear free operation  $\mathcal{E}\in\mathcal{S}_0$ leaving invariant $\mathrm{STAB0}$.
\end{itemize}
\begin{proof}
We first prove faithfulness. By definition, $\rho\in\operatorname{STAB0}$ if and only if $\rho$ admits a purification that is a pure stabilizer state, and since $\mathcal{M}$ is a faithful pure-state monotone, $\mathcal{M}(\ket{\psi})=0$ if and only if $\ket{\psi}$ is a stabilizer state. 
We now prove monotonicity. Clifford unitaries and the composition with 
stabilizer extra qubits leave the magic of purification invariant, while a 
non-selective measurement in the computational basis can be implemented 
by coupling the measured qubit to a stabilizer auxiliary system through a CNOT 
gate and subsequently tracing out the auxiliary system. Hence, it is sufficient 
to prove monotonicity under partial trace. Let $\rho$ be a (possibly) mixed state and $\ket{\psi_{\rho}}=\sum_{i}\ket{\phi_i}\otimes\ket{i}_E$ (with $\langle\phi_i|\phi_i\rangle$ the eigenvalues of $\rho$) be the optimal purification for \cref{def:mop}. Performing the partial trace we have $\rho_X=\tr_{\bar{X}}\rho$. Due to linearity of the partial trace, we have that $\rho_X=\sum_{i}\tr_{\bar{X}}\ketbra{\phi_i}{\phi_i}$. Hence, clearly, a possible purification of $\rho_X$ is $\ket{\psi_{\rho_X}}=\sum_{i}\ket{\phi_{i}}_{X\bar{X}}\otimes \ket{i}_{E}$ with auxiliary systems $E\cup X$. Hence, we have that $\mathcal{M}_{p}(\rho_X)\le \mathcal{M}(\ket{\psi_{\rho_X}})=\mathcal{M}_p(\rho)$. 
\end{proof}

\end{proposition}

\begin{remark}
   The stabilizer entropy introduced in \cref{def:sre} plays a peculiar role in the resource theory of magic of purification. Indeed, although counterexamples exist to its monotonicity under partial trace, and hence it cannot by itself be regarded as a magic-of-purification monotone without the optimization introduced in \cref{def:mop}, the SRE nevertheless \textit{faithfully} identifies the free set $\mathrm{STAB0}$ underlying the resource theory of magic of purification.

\end{remark}

Along the lines of entanglement theory, we now define the magic of formation as the standard \textit{convex-roof extension} of a pure-state magic monotone to mixed states within the corresponding convex hull, which, we recall, is denoted by $\mathrm{STAB1}$ throughout this work.

\begin{definition}[Magic of formation]\label{def:mof} Let $\rho$ be a state (possibly) mixed. Then, the magic of formation is defined as
\begin{align}
\mathcal{M}_f(\rho):=\inf_{\{p_i,\phi_i\}}\left\{\sum_{i}p_i\mathcal{M}(\phi_i)\,:\, \rho=\sum_ip_i\phi_i\right\}
\end{align}
\end{definition}
It is well known \cite{Chitambar2019resources} that the magic of formation induces a resource theory on $\operatorname{STAB1}$ \cite{leone2024stabilizer}, provided that the measure $\mathcal{M}$ used in the optimization obeys strong monotonicity (\cref{def:stabilizermonotones}). 
The natural question that arises at this point is: is there a relation between $\mathcal{M}_p$ and $\mathcal{M}_f$, given that for pure states they coincide? The answer is Yes, as stated by the following theorem:
\begin{theorem}[Magic of formation lower bounds the magic of purification]\label{thm:mfmp} Given a pure state magic monotone $\mathcal{M}$ obeying strong monotonicity, for any state $\rho$ then the following bound holds
\begin{align}
\mathcal{M}_f(\rho)\le \mathcal{M}_{p}(\rho)
\end{align}
\begin{proof}
Given $\rho = \sum_i p_i \ketbra{\phi_i}{\phi_i}$, let $\ket{\psi_{\rho}}$ be an optimal purification in the sense of \cref{def:mop}, i.e. $\mathcal{M}_p(\rho)=\mathcal{M}(\ket{\psi_{\rho}})$. Then $\ket{\psi_{\rho}}=\sum_{i}\sqrt{p_i}\ket{\phi_i}\otimes\ket{\psi_i}_E$, where $\ket{\psi_i}_E$ are not stabilizer states in general. We can decompose them in the computational basis as $\ket{\psi_i}_E=\sum_{j}\alpha_{ij}\ket{j}_E$ and rewrite the purification as
\begin{align}
\ket{\psi_{\rho}}=\sum_{ij}\alpha_{ij}\sqrt{\lambda_i}\ket{\phi_{i}}\otimes \ket{j}_E=\sum_{j}\ket{\tilde{\phi}_j}\otimes  \ket{j}_E
\end{align}
Notice that $\sum_{j}\ketbra{\tilde{\phi}_j}{\tilde{\phi}_j}$ is a valid convex decomposition of $\rho$. Now, we apply the following stabilizer operation: we measure the subsystem $E$ in the computational basis, we obtain the state $\ket{\tilde{\phi}_j}$ with probability $q_{j}=\langle\tilde{\phi}_j|\tilde{\phi}_j\rangle$. Thanks to the strong monotonicity of $\mathcal{M}$, we know that it holds that
\begin{align}
\mathcal{M}(\ket{\psi_{\rho}})\ge \sum_{j}q_j\mathcal{M}\left(\frac{\ket{\tilde{\phi}_j}}{\sqrt{q_j}}\right)
\end{align}
However, given that $\sum_{j}\ketbra{\tilde{\phi}_j}{\tilde{\phi}_j}=\sum_{j}q_j\frac{\ketbra{\tilde{\phi}_j}{\tilde{\phi}_j}}{q_j}$, we also have
\begin{align}
\sum_{j}q_j\mathcal{M}\left(\frac{\ket{\tilde{\phi}_j}}{\sqrt{q_j}}\right)\ge \mathcal{M}_{f}(\rho)
\end{align}
since $\mathcal{M}_f(\rho)$ optimizes over all the possible convex decomposition. The latter proves the theorem.
\end{proof}

\end{theorem}

In the following theorem, we characterize the log-stabilizer fidelity of purification. Once again, this quantity stands out as a particularly convenient magic monotone, as it admits explicit expressions that make its evaluation straightforward.

\begin{theorem}[Log-stabilizer fidelity of purification]\label{lem:ancilla}
Given the log-stabilizer fidelity, the optimization procedure required in \cref{def:mop} can be conducted with an auxiliary system $E$ of size $|E|= n$. Moreover, the log-stabilizer fidelity of purification can be explicitly written as
\begin{align}
    D_{\mathrm{stab}}^{\mathrm{pur}}(\rho)\coloneqq-\log \max_{\sigma\in\mathrm{STAB0}}F(\rho,\sigma)
\end{align}
\begin{proof}
    This just follows from the Uhlmann theorem. More explicitly,  Uhlmann theorem says the following, the fidelity between two (possibly mixed) states $\rho,\sigma$ is the maximum overlap between a purification of $\sigma$ over all possible purification of $\rho$ performed with an ancillary system $|E|\ge\log\max(\rank(\rho),\rank(\sigma))$. In other words, for any ancillary system $E$ (with appropriate size) there exist a maximum of the overlap between the two states. These maxima are linked by an isometry between subsystems of different dimensions.

    In the definition of stabilizer fidelity, for each subsystem of appropriate size and for any stabilizer state, we can write
    \begin{align}
    \max_{\psi\,:\, \tr_E(\psi)=\rho,|E|=m}|\langle\psi_{\rho}|\sigma\rangle|^2=F(\rho,\tr_E\sigma)
    \end{align}
    by the Uhlmann theorem. Since pure stabilizer states are a finite set, their partial traces on $n$ qubits, i.e. $\tr_E\sigma$, are a finite set. Thus, varying $m\in[\lceil\log\rank(\rho)\rceil,\infty)$, we can therefore equivalently write the log-stabilizer fidelity of purification $D_{\mathrm{stab}}^{\mathrm{pur}}$ as
    \begin{align}
    D_{\mathrm{stab}}^{\mathrm{pur}}(\rho)=-\log\max_{\sigma\in \operatorname{STAB}_0}F(\rho,\sigma)\label{eq:alternativestabfidelity}
    \end{align}
    Since the maximal rank of a matrix on $n$ qubits is $2^n$, we trivially have that we can optimize over the set of size $|E|=n$. This concludes the proof.
\end{proof}
\end{theorem}

\begin{remark}
   The optimization form of the log-stabilizer fidelity of purification in \cref{lem:ancilla} makes it evident that the magic of purification induces a resource theory over the subset $\mathrm{STAB0}\subset \mathrm{STAB1}$, as the corresponding optimization takes the standard form of a distance-based resource measure, namely an optimization over the set of free states~\cite{Chitambar2019resources}.

\end{remark}

\begin{corollary}\label{cor:fpn}
    It holds that $D_{\mathrm{stab}}^{\mathrm{pur}}(\rho)\le n$ for any $n$ qubit quantum state $\rho$. 
    \begin{proof}
        Thanks to Eq.~\eqref{eq:alternativestabfidelity}, we can upper bound it with the maximally mixed state $D_{\mathrm{stab}}^{\mathrm{pur}}(\rho)\le-\log F(\rho, I/2^n)\le n$, which concludes the proof.
    \end{proof}
\end{corollary}

\subsection{Non-local magic as the minimal magic of purification across unitary orbits}

The non-local magic of a bipartite pure state is a spectral quantity, depending only on its Schmidt coefficients; see \cref{th:mainth}. It is therefore natural to ask how it relates to the magic of the corresponding reduced state, and, in particular, which states minimize the magic along a given unitary orbit. In the following, we establish the second main result of this work: the minimum magic of purification of a state $\rho$ over its unitary orbit $U\rho U^{\dagger}$ coincides with the non-local magic of \textit{any} purification of $\rho$. We also characterize the states that attain this minimum.

\begin{theorem}[A relationship between magic of purification and non-local magic]\label{thm:orbitmin}
The minimal log-stabilizer fidelity of purification in the unitary orbit of a density operator equals
the non-local magic of its purification: for every quantum state
$\rho_A$ and every purification
$\ket{\rho}_{AB}$,
\begin{align}
\min_{U_A\in\mathcal{U}_A} D_{\mathrm{stab}}^{\mathrm{pur}}\big(U_A\rho_A U_A^\dagger\big)
=D_{\mathrm{stab}}^{\mathrm{NL}}\big(\ket{\rho}_{AB}\big)\,.
\end{align}
\begin{proof} Let us denote the eigenvalues of the reduced density matrix $\rho_A$ on $A$ as
$p_1^\downarrow\ge p_2^\downarrow\ge\dots$, we set $\lambda_i:=\sqrt{p_i}$,
$r:=\rank\rho_A$, and
\begin{align}
f(k):=2^{-k}\Big(\sum_{i=1}^{2^k}\lambda_i^\downarrow\Big)^{2}.
\end{align}
Then, notice that every $\sigma\in STAB_0$ is of the form $\sigma=\Pi_S/2^k$ with $\Pi_S$ a
rank-$2^k$ \emph{stabilizer code} projector, equivalently
$\sigma=C\,(\Pi_0^k/2^k)\,C^\dagger$ with $C$ Clifford and
$\Pi_0^k=\sum_{i<2^k}\ketbra{i}{i}$. (Note that not every $2^k$-element subset of
a stabilizer basis spans a stabilizer code; the subset must be an affine
subspace). Using \cref{lem:ancilla} and writing $\sigma=C(\Pi_0^k/2^k)C^\dagger$, the
unitary invariance of the fidelity lets us absorb the Clifford $C$ into $U_A$, so
that with $V:=C^\dagger U_A$ ranging over all of $\mathcal{U}_A$,
\begin{align}
\max_{U_A}\max_{\sigma\in\mathrm{STAB0}}F(U_A\rho_AU_A^\dagger,\sigma)
=\max_{k}\ \max_{V\in\mathcal{U}_A} F\Big(V\rho_AV^\dagger,\tfrac{\Pi_0^k}{2^k}\Big)
=\max_{k}\ \max_{V} 2^{-k}\big\|V\sqrt{\rho_A}V^\dagger\,\Pi_0^k\big\|_1^2,
\end{align}
where we used $\sqrt{\Pi/2^k}=\Pi/2^{k/2}$ and
$F(\tau,\omega)=\norm{\sqrt{\tau}\sqrt{\omega}}_1^2$. The operator
$V\sqrt{\rho_A}V^\dagger$ is positive with eigenvalues $\lambda_i$, and
$\rank\Pi_0^k=2^k$, so \cref{lem:compression} gives
$\big\|V\sqrt{\rho_A}V^\dagger\Pi_0^k\big\|_1\le\sum_{i\le2^k}\lambda_i^\downarrow$,
with equality attained by any $V$ mapping the $2^k$ leading eigenvectors of
$\rho_A$ onto $\operatorname{supp}\Pi_0^k$. Hence
$\max_{U_A}\max_{\sigma\in\mathrm{STAB0}}F(U_A\rho_AU_A^\dagger,\sigma)=\max_k f(k)$. Thanks to \cref{th:mainth}, the two maxima therefore coincide for \emph{every} purification. This proves the theorem.
\end{proof}
\end{theorem}

%[Characterisation of the minimisers]
\begin{remark}\label{cor:minimisers}
A state $\rho_A$ attains $\min_{U_A\in\mathcal{U}_A} D_{\mathrm{stab}}^{\mathrm{pur}}\big(U_A\rho_A U_A^\dagger\big)$ if and only if there exists $k$
with $f(k)=\max_{k'}f(k')$ such that the span of the eigenvectors of $\rho_A$
belonging to its $2^{k}$ largest eigenvalues is a stabilizer code subspace.
\end{remark}

\begin{remark}\label{cor:stab1exists}
Every unitary orbit contains a magic minimum belonging to $\mathrm{STAB1}$. Indeed
$\rho'=\sum_i p_i^\downarrow\ketbra{i}{i}$, diagonal in a stabilizer basis, is a
log-stabilizer fidelity minimiser, since $\operatorname{span}\{\ket{i}\}_{i<2^k}$ is a stabilizer code for
every $k$; and $\rho'$ is a mixture of pure stabilizer states.
\end{remark}

\section{Non-local magic, Flatness and semicausality}\label{sec:causality}
 We now show that non-local magic is the resource needed for \emph{graded}
signalling. Vanishing non-local magic is equivalent to a flatness condition on the
entanglement spectrum \cite{Tirrito2024,Cao2025}, and
thus is the signature of stabilizer entanglement spectra; equivalently, the reduced state can be rotated into
$\mathrm{STAB0}$.
Through the Choi--Jamio{\l}kowski isomorphism we show that the same flatness
governs how the associated quantum operation can signal: stabilizer resources
permit only all-or-nothing causal influence, and any finer, continuously tunable
influence costs magic.

Here we associate to a pure state $\ket{\psi}_{AB}$ a POVM and a quantum map. 
Consider a system of $n$ qubits with $d=2^n$, and let
$\ket{\Omega}=d^{-1/2}\sum_{i=1}^{d}\ket{i}_A\ket{i}_B$ be the maximally
entangled state. Every bipartite pure state can be written as
$\ket{\psi}_{AB}=(K\otimes\id)\ket{\Omega}$ for a unique operator $K$ obeying
$\tr(K^\dagger K)=d$, and we let $\mathcal{E}_K(\rho)=K\rho K^\dagger$ be the
associated completely positive map. Denoting by $s_1\ge s_2\ge\dots$ the singular
values of $K$, the Schmidt coefficients of $\ket{\psi}$ are $\theta_i=s_i/\sqrt d$.

The map $\mathcal{E}_K$ is not trace preserving, so to speak of it operationally
we must realise it as one branch of an instrument. Set
\ba
\hat K:=K/s_1,\qquad\text{so that}\qquad \hat K^\dagger\hat K\le\id ,
\ea
with largest eigenvalue $1$. Then $\{\hat K,\ \sqrt{\id-\hat K^\dagger\hat K}\}$
is a legitimate two-outcome instrument, and we call
\ba
p_\psi(\rho):=\tr\big(\hat K^\dagger\hat K\rho\big)\in[0,1]
\label{eq:successprob}
\ea
the \emph{success probability} of the filter: the probability that the branch
$\hat K$ occurs on input $\rho$.

Writing the singular value decomposition $\hat K=\sum_i(s_i/s_1)\ketbra{u_i}{v_i}$, we construct 
the \emph{effect}
\begin{align}
E:=\hat K^\dagger\hat K=\sum_i\frac{s_i^{2}}{s_1^{2}}\ketbra{v_i}{v_i}
=\sum_i\frac{\theta_i^{2}}{\theta_1^{2}}\ketbra{v_i}{v_i}
\end{align}
acts on the input space, while $\hat K\hat K^\dagger$ acts on the output space and
fixes $\rho_A$. The three cases below give the standard sharp/unsharp classification
of the two-outcome POVM $\{E,\id-E\}$.

\begin{lemma}[Choi dictionary]\label{lem:choi}
With the above identification, $\rho_A=\tfrac1d KK^\dagger$ and
$\rho_B=\tfrac1d\overline{K^\dagger K}$; in particular the Schmidt spectrum
$\{\theta_i^2\}$ of $\ket{\psi}$ is the spectrum of $K^\dagger K$ rescaled by
$1/d$. Consequently there are three cases:
\begin{itemize}
\item[(i)] $\rho_A=\id/d$ (flat of full rank, i.e.\ maximally entangled)
$\iff \hat K$ is unitary $\iff\mathcal{E}_{\hat K}$ is trace preserving. Then
$p_\psi\equiv1$: the branch occurs with certainty and its occurrence carries no
information about the input.
\item[(ii)] $\rho_A=\Pi_{\rm out}/2^{k}$ for some rank-$2^k$ projector
$\Pi_{\rm out}$ $\iff$ $\hat K$ is a partial isometry, i.e.\ $E=\Pi_{\rm in}$ is
a rank-$2^k$ projector. Here
$\Pi_{\rm in}=\sum_{i\le2^{k}}\ketbra{v_i}{v_i}$ and
$\Pi_{\rm out}=\sum_{i\le2^{k}}\ketbra{u_i}{u_i}$ act on the input and on the
output space respectively and are exchanged by $\hat K$; they need not coincide.
Then $p_\psi(\rho)=\tr(\Pi_{\rm in}\rho)$, so $\{\Pi_{\rm in},\id-\Pi_{\rm in}\}$
is a \emph{sharp}, that is projective, two-outcome measurement: the filter
accepts with certainty every state supported in $\operatorname{ran}\Pi_{\rm in}$,
never accepts one supported in the orthogonal complement, and assigns the same
probability to all states within each of the two. Acceptance therefore reveals
which of the two subspaces the input occupied, and nothing more.
\item[(iii)] Otherwise $E$ has an eigenvalue strictly between $0$ and $1$: the
effect is \emph{unsharp}, and $p_\psi$ separates states \emph{inside} the
support. We call such a filter \emph{graded}. For instance
$\hat K=\operatorname{diag}(1,\sqrt{\eta})$ with $0<\eta<1$ accepts $\ket{0}$
with certainty and $\ket{1}$ with probability $\eta$, although both lie in its
support.
\end{itemize}
\begin{proof}
Writing $\ket{\psi}=d^{-1/2}\sum_i K\ket{i}\otimes\ket{i}$ and tracing out either
factor gives the two marginals; normalisation of $\ket{\psi}$ is
$\tr(K^\dagger K)=d$. Since $KK^\dagger$ and $K^\dagger K$ are isospectral, both
marginals have spectrum $\{s_i^2/d\}=\{\theta_i^2\}$, which is the Schmidt
spectrum. As $E$ has eigenvalues $\theta_i^2/\theta_1^2$, flatness of the nonzero
part of $\{\theta_i^2\}$ is equivalent to all these eigenvalues lying in
$\{0,1\}$, that is to $E$ being a projector. The three cases are therefore
precisely $E=\id$, $E$ a projector different from $\id$, and
$\operatorname{spec}(E)\cap(0,1)\neq\emptyset$; \cref{eq:successprob} then gives
the stated form of $p_\psi$ in each case.
\end{proof}
\end{lemma}

As the Choi dictionary is between \emph{states} and \emph{operators} in order to speak of free
operations we must say which operators correspond to stabilizer Choi states. The following lemma establishes that

\begin{lemma}[Stabilizer Choi states]\label{lem:stabchoi}
$\ket{\psi}_{AB}$ is a pure stabilizer state if and only if
$K\propto C\,\Pi_S$, with $C$ a Clifford unitary and $\Pi_S$ the projector onto a
stabilizer code. Equivalently, $\mathcal{E}_{\hat K}$ is a Clifford unitary
followed by, or preceded by, a stabilizer projection.
\begin{proof}
($\Leftarrow$) $\ket{\Omega}$ is a stabilizer state, $\Pi_S\otimes\id$ is a
product of projectors onto Pauli eigenspaces, and stabilizer states are mapped to
(unnormalised) stabilizer states by such projections and by Cliffords; hence
$(C\Pi_S\otimes\id)\ket{\Omega}$ is a stabilizer state.
($\Rightarrow$) By the bipartite normal form, a pure stabilizer state on $AB$ is
local-Clifford equivalent to $k$ Bell pairs tensored with a product stabilizer
state, i.e.\ $\ket{\psi}=(C_A\otimes C_B)\big(\ket{\Phi_{2^k}}\otimes
\ket{0}^{\otimes(n-k)}_A\ket{0}^{\otimes(n-k)}_B\big)$. Using
$(\id\otimes M)\ket{\Omega}=(M^{T}\otimes\id)\ket{\Omega}$ the Clifford $C_B$ may
be transferred to the $A$ side, and the seed state is $(\Pi\otimes\id)\ket\Omega$
for $\Pi$ the stabilizer code projector onto the first $k$ qubits; collecting the
Cliffords and using that $C\Pi C^\dagger$ is again a stabilizer code projector
gives $K\propto C\Pi_S$.
\end{proof}
\end{lemma}

By \cref{lem:stabchoi} the marginals of stabilizer Choi states are of the form
$\Pi_S/2^{k}$, so stabilizer resources populate only the cases (i) and (ii) of
\cref{lem:choi}, and never (iii).

\begin{corollary}\label{cor:gradedfilters}
Let $\ket{\psi}_{AB}$ be the Choi state of a filter with effect $E$, and let
$\rho_A$ be its reduced state. The operations whose Choi state is a stabilizer
state are, up to normalisation, the Clifford unitaries ($\Pi_S=\id$, case (i))
and the sharp stabilizer projections ($\Pi_S\neq\id$, case (ii)); in both the
effect $E$ is a projector. More generally, the following are equivalent:
\begin{itemize}
\item[(a)] the effect $E$ is a projector, i.e.\ the filter is trivial or sharp
(cases (i)--(ii) of \cref{lem:choi});
\item[(b)] the unitary orbit of $\rho_A$ meets $\mathrm{STAB0}$;
\item[(c)] $\min_{U_A\in\mathcal{U}_A}
D^{\mathrm{pur}}_{\mathrm{stab}}\big(U_A\rho_AU_A^\dagger\big)=0$;
\item[(d)] $D^{\mathrm{NL}}_{\mathrm{stab}}(\ket{\psi})=0$.
\end{itemize}
Consequently every \emph{unsharp} filter---one whose effect has an eigenvalue
strictly between $0$ and $1$, equivalently one that resolves states within its
own support---costs magic: no unitary rotation of its reduced state lies in
$\mathrm{STAB0}$, and
\begin{align}
D_{\mathrm{stab}}(\ket{\psi})\ \ge\ D^{\mathrm{NL}}_{\mathrm{stab}}(\ket{\psi})
=\min_{U_A\in\mathcal{U}_A}D^{\mathrm{pur}}_{\mathrm{stab}}\big(U_A\rho_AU_A^\dagger\big)
\ >\ 0 .
\end{align}
\begin{proof}
The first sentence is \cref{lem:stabchoi} together with \cref{lem:choi}.
(a)$\Leftrightarrow$(b): by \cref{lem:choi}, $E$ is a projector iff the Schmidt
spectrum of $\ket{\psi}$ is $2^k$-flat, which is equivalent
to the unitary orbit of $\rho_A$ meeting $\mathrm{STAB0}$.
(b)$\Leftrightarrow$(c): by \cref{lem:ancilla},
$D^{\mathrm{pur}}_{\mathrm{stab}}(\rho)=-\log\max_{\sigma\in\mathrm{STAB0}}F(\rho,\sigma)$
vanishes iff $F(\rho,\sigma)=1$ for some $\sigma\in\mathrm{STAB0}$, i.e.\ iff
$\rho\in\mathrm{STAB0}$; moreover it is a continuous function of $\rho$, being
the maximum of finitely many continuous functions, so its minimum over the
compact unitary orbit is attained. Hence the minimum vanishes iff some point of
the orbit lies in $\mathrm{STAB0}$.
(c)$\Leftrightarrow$(d) is \cref{thm:orbitmin}. Finally, for an unsharp filter
(a) fails, hence so does (d), and
$D_{\mathrm{stab}}(\ket{\psi})\ge D^{\mathrm{NL}}_{\mathrm{stab}}(\ket{\psi})$
because the non-local magic is a minimum over the local-unitary orbit, which
contains $\ket{\psi}$ itself.
\end{proof}
\end{corollary}

Magic is the resource of graded signalling. The case (ii) is within the stabilizer resources, and, though signalling, it is   all or nothing. Case (iii) requires non-local non-stabilizerness and  is graded 
causal influence. 
\section{Conclusions and perspectives}

We have provided an exact characterization of non-local magic for the log-stabilizer fidelity and connected it to the magic of purification of the reduced state. A natural next step is to determine whether analogous closed-form expressions in terms of the entanglement spectrum can be obtained for other magic monotones, in particular the stabilizer R\'enyi entropies (SREs), and whether the minimizers of SRE-based non-local magic coincide with those identified here for the stabilizer fidelity. It would also be interesting to further explore the connection between non-local magic, causality, and quantum steering, with the goal of clarifying whether non-stabilzierness can provide a quantitative resource for more general forms of non-local influence and directional quantum correlations.

\medskip

{\bf Note added.} During the completion of this manuscript, a closely related study obtained the same closed-form characterization of the stabilizer-fidelity optimization in \cref{th:mainth} and developed complementary consequences for entanglement embezzlement~\cite{Sierant2026}.

\bibliographystyle{apsrev4-1}
\bibliography{LVH}

\let\oldaddcontentsline\addcontentsline% Store \addcontentsline
\renewcommand{\addcontentsline}[3]{}% Make \addcontentsline a no-op

\let\addcontentsline\oldaddcontentsline
\appendix
\onecolumngrid
\clearpage
\begin{center}
    {\normalfont\Large\bfseries Appendix}
\end{center}
\setcounter{secnumdepth}{2}
\setcounter{equation}{0}
\setcounter{figure}{0}
\setcounter{table}{0}
\setcounter{section}{0}
\renewcommand{\thetable}{S\arabic{table}}

\renewcommand{\thefigure}{S\arabic{figure}}
\renewcommand{\thesection}{S.\arabic{section}} 
\counterwithout{equation}{section}
\renewcommand{\theequation}{S\arabic{equation}}

\section{Lemmata}\label{app:nlm}
\begin{lemma}\label{lem1}
    Let $\ket{\psi},\ket{\phi}$ two bipartite states with Schmidt spectrum $\lambda_i^2,\theta_j^2$. Then
    \begin{align}
        |\langle\psi|\phi\rangle|\le \sum_i{\lambda_i\theta_i}
    \end{align}
    with equality iff they are have the same Schmidt basis (ordered).
    \begin{proof}
        We can write every bipartite state as
        \begin{align}
            \ket{\psi}=\sum_{p,q}X_{pq}\ket{p}\ket{q}
        \end{align}
        where $\ket{p}$ (resp. $\ket{q}$) is a basis for $A$ (resp. $B$). Notoriously, the singular values of $X$ correspond to the Schmidt spectrum of $\ket{\psi}$. Hence, we have $X=U\lambda V^{\dagger}$ and $Y=R\theta W^{\dagger}$ with $Y$ the coefficient matrix corresponding to $\ket{\phi}$. We can write the overlap
        \begin{align}
            |\langle\psi|\phi\rangle|=|\tr(Y^{\dagger}X)|
        \end{align}
        and we need to show that $|\tr(Y^{\dagger}X)|\le \sum_{i}\lambda_i\theta_i$. First notice that
        \begin{align}
            \tr(Y^{\dagger}X)=\tr(\theta R^{\dagger}U\lambda V^{\dagger}W)
        \end{align}
        Defining $A=R^{\dagger}U$ and $B=V^{\dagger}W$, we have
        \begin{align}
            |\tr(Y^{\dagger}X)|=|\tr(\theta A\lambda B)|=\Big|\sum_{i,j}\theta_iA_{ij}\lambda_jB_{ji}\Big|\le \sum_{i,j}\theta_i\lambda_j |A_{ij}||B_{ji}|
        \end{align}
        Defining $C_{ij}=|A_{ij}||B_{ji}|$, let us show that $\sum_{i}C_{ij},\sum_{j}C_{ij}\le 1$. We have
        \begin{align}
            \sum_{j}C_{ij}=\sum_{j}|A_{ij}||B_{ji}|\le \Big(\sum_{j}|A_{ij}|^2\Big)^{1/2}\Big(\sum_{j}|B_{ji}|^2\Big)^{1/2}\le 1
        \end{align}
        where we used Cauchy-Schwartz inequality and the fact that both $A$ and $B$ are unitaries so:
        \begin{align}
            \sum_{j}|A_{ij}|^2=\sum_{j}A_{ij}A_{ij}^{*}=\sum_jA_{ij}(A^{\dagger})_{ji}=1
        \end{align}

    Now, we define $a_i=\lambda_i-\lambda_{i+1}$ and $b_i=\theta_i-\theta_{i+1}$ by identifying $\theta_{2^n+1}=0$. We can clearly write
    \begin{align}
        \lambda_i=\sum_{j\ge i}a_{j},\quad \theta_i=\sum_{j\ge i}b_j
    \end{align}
    implying
    \begin{align}
        \sum_{ij}\theta_i\lambda_j C_{ij}=\sum_{ij}\sum_{k\ge j}\sum_{l\ge i}C_{ij}a_{k}b_l=\sum_{i,j,k,l}C_{ij}a_kb_l\delta_{j\le k}\delta_{i\le l}=\sum_{kl}a_kb_l\sum_{i\le l,j\le k}C_{ij}
    \end{align}
    Now, notice that we can write
    \begin{align}
        \sum_{i\le l,j\le k}C_{ij}\le \sum_{i\le l}\sum_jC_{ij}\le l+1
    \end{align}
    and
    \begin{align}
        \sum_{i\le l,j\le k}C_{ij}\le \sum_{j\le k}\sum_iC_{ij}\le k+1
    \end{align}
    meaning that
    \begin{align}
        \sum_{ij}\theta_i\lambda_j C_{ij}\le \sum_{kl}a_kb_l\min\{k+1,l+1\}
    \end{align}
    On the other hand, we also have
    \begin{align}
        \sum_{i}\theta_i\lambda_i=\sum_{i,k,l}\delta_{i\le k}\delta_{i\le l}a_{k}b_l=\sum_{kl}a_kb_l\sum_{i}\delta_{i\le k}\delta_{i\le l}=\sum_{k,l}a_kb_l\min\{k+1,l+1\}
    \end{align}
    which proves the statement.
        \end{proof}
\end{lemma}

\begin{lemma}[Compression bound]\label{lem:compression}
Let $A\ge 0$ have eigenvalues $\lambda_1^\downarrow\ge\lambda_2^\downarrow\ge\dots$
and let $\Pi$ be a projector of rank $s$. Then
\begin{align}
\norm{A\Pi}_1\ \le\ \sum_{i=1}^{s}\lambda_i^\downarrow ,
\end{align}
with equality if and only if $\operatorname{supp}\Pi$ is a direct sum of
eigenspaces of $A$ exhausting its $s$ largest eigenvalues (with multiplicity).
\begin{proof}
$\norm{A\Pi}_1=\tr\sqrt{\Pi A^2\Pi}$. Let $\mu_1\ge\dots\ge\mu_s$ be the
eigenvalues of the compression $\Pi A^2\Pi|_{\operatorname{supp}\Pi}$. By Cauchy's
interlacing theorem $\mu_j\le(\lambda_j^\downarrow)^2$, hence
$\norm{A\Pi}_1=\sum_j\sqrt{\mu_j}\le\sum_{j\le s}\lambda_j^\downarrow$.
Equality forces $\mu_j=(\lambda_j^\downarrow)^2$ for every $j$, whence
$\tr(A^2\Pi)=\sum_{j\le s}(\lambda_j^\downarrow)^2$. This is the Ky Fan maximum of
$\tr(A^2\Pi)$ over rank-$s$ projectors, attained precisely when
$\operatorname{supp}\Pi$ is spanned by eigenvectors of $A^2$ belonging to its $s$
largest eigenvalues. Conversely, for such a $\Pi$ the compression is diagonal with
$\mu_j=(\lambda_j^\downarrow)^2$ and equality holds.
\end{proof}
\end{lemma}

\end{document}